\providecommand{\U}[1]{\protect\rule{.1in}{.1in}}
\newtheorem{theorem}{Theorem}
\newtheorem{corollary}[theorem]{Corollary}
\newtheorem{definition}[theorem]{Definition}
\newtheorem{example}[theorem]{Example}
\newtheorem{proposition}[theorem]{Proposition}
\newtheorem{remark}[theorem]{Remark}
\newenvironment{proof}[1][Proof]{\noindent\textbf{#1.} }{\ \rule{0.5em}{0.5em}}
\definecolor{darkorange}{rgb}{1.0, 0.55, 0.0}
\newcommand{\B}{{\mathcal{B}}}
\newcommand{\act}{{\mathrm{active}}}
\newcommand{\hide}[1]{}
\providecommand{\keywords}[1]{\textbf{Keywords:} #1}
\begin{document}

\title{Dynamic reconfiguration of component-based systems described by propositional configuration logic}
	\author{George Rahonis$^{a,}$\ and Melpomeni Soula$^{b}$ \\Department of Mathematics\\Aristotle University of Thessaloniki\\54124 Thessaloniki, Greece\\$^{a}$grahonis@math.auth.gr, $^{b}$soulmelp@math.auth.gr}
	\date{}
	\maketitle


\begin{abstract}
We investigate dynamic reconfigurable component-based systems whose architectures are described by formulas of Propositional Configuration Logics. We present several examples of reconfigurable systems based on well-known architectures, and state preliminary decidability results.

\keywords{Dynamic reconfigurable component-based systems \and Architectures \and Propositional Configuration Logic.}
\end{abstract}

\section{Introduction}

One of the most rigorous formalisms in systems engineering is the so-called component-based. Component-based formalism provides a strict mathematical representation of complex systems and supports specification languages (cf. for instance \cite{Ma:Co,Pi:Ar}) as well as a strict description of important characteristics such as (dynamic) reconfiguration \cite{Ho:Te}. A component-based system consists of a finite number of different components types (usually represented by transition labelled systems) with additional several copies of them called instances. Communication among component instances is achieved by their interfaces, i.e., ports. Specifically, ports of different components connect to each other via interactions. Sets of interactions obey to specific rules which determine the topology and operation of component-systems. The sets of rules are  described by architectures. There are several ways to represent  architectures; logics were proved very reliable and moreover conclude important results (cf. for instance \cite{Ma:Co, Pa:We, Pi:Ar}). On the other hand, several practical applications, like IoT systems, are described with component-based systems where the number of instances of several of their components changes dynamically, i.e., some new instances can be connected, some other may be inactivated, etc. Though in the dynamic reconfiguration process, fault interactions may occur and this in turn  affects the right implementation  of the underlying  architecture of the system. It is a challenge to decide whether a dynamic reconfiguration of a component-based system preserves its main characteristics and if not, whether there is a possibility to "correct" it. \\
In this paper, a work in progress, we suggest a method based on propositional configuration logics (PCL for short). PCL which was introduced in \cite{Ma:Co}, provides a strict mathematical specification language to describe architectures of component-based systems. Almost every known architecture can be represented by a formula in PCL and its first- and second-order levels. PCL has been also extended to a quantitative set-up (cf. \cite{Ka:We,Pa:We}). In our model, the reconfiguration of a component-based and thus of  the implementation of its architecture is described by  an assignment mapping $\sigma$ which activates/inactivates  interactions among the component instances. We show that we can decide whether such an infinite assignment ensures that the architecture is finally (i.e., after a "time" to infinity) rightly implemented. If this is not the case, we investigate whether a restriction of $\sigma$ provides a "right" assignment.

\section{Related work}
There are several approaches to describe reconfiguration of component-based systems. In \cite{Al:Sp,Ba:Pr,Ma:Re,Me:AD,Se:Sp} the authors study the problem of dynamic reconfigurable architectures by building several software tools, where as in \cite{Ma:Dy} and \cite{Ca:Su} architectures' reconfiguration is studied with $\pi$-calculus and $\pi$-ADL, respectively. Several types of graph grammars models were used also for the investigation of dynamic changing of architectures \cite{Br:Mo,Bu:Dy,Ha:St}. In \cite{Bo:Lo} a resource logic is used as a tool for supporting local reasoning of parametric and reconfigurable architectures. A tool for service-oriented reconfigurable architectures, based on LTL, was built in \cite{Fi:Mo}.

Our work is more closely to the recent paper \cite{Ho:Te}. In that paper the authors introduced a temporal configuration logic, i.e., an LTL whose atomic  propositions are configuration logic formulas, in order to describe  architectures of reconfigurable component-based systems.  The unknown number of component instances is represented by the first- and second-order levels of PCL. In addition, with those two levels the authors could describe architectures such as the ones with the ring structure which, as it is well known, cannot be described by formulas in PCL. 
In our theory we assume that we have a bounded number of instances of every component type. The "bound" can be as large as we want, for instance all the components instances that should exist in the real word. Then, these numbers can be trivially expanded in formulas of our logic. We show that most of the well-known architectures can described by formulas in our logic, though architectures requiring first-and second-order levels of configuration logic like ring, grid, etc. remain out of our study for the time being. The advantage of our logic lies on the produced decidability results.

	\section{Preliminaries}
	For every natural number $n>0$, we set $[n]=\{1, \ldots , n\}$. Let $A$ be an alphabet, i.e, a non-empty finite set. As usually the set of all infinite words over A is denoted by $A^\omega$. A (non-deterministic) \emph{Büchi automaton} is a five-tuple $\mathcal{A}=(Q,A,I,\Delta,F)$, where $Q$ is a finite set of states, $A$ is the input alphabet, $I\subseteq Q$ is the set of initial states, $\Delta \subseteq Q\times A \times Q$ is the set of transitions and $F\subseteq Q$ is the set of final states. 
		
	Let $w=a_0a_1...\in A^\omega.$ A path of $\mathcal{A}$ over $w$ is an infinite sequence of transitions  $P_{w}= \left(\left(q_i,a_i,q_{a_{i+1}} \right)  \right)_{i\geq 0},$ where $ (q_i,a_i,q_{a_{i+1}} )\in \Delta$ for every $i\geq0.$  We denote by $In^{Q}(P_{w})$ the set of states occurring infinitely many times along $P_w$. Then, $P_w$ is \emph{successful} if $In^{Q}(P_{w}) \cap F\neq \emptyset$. An infinite word $w\in A^\omega$ is \emph{recognized} (or \emph{accepted}) by $\mathcal{A}$ if there is a successful path of $\mathcal{A}$ over $w.$ The \emph{language} (or \emph{behavior}) of  $\mathcal{A}$ which is denoted by $L(\mathcal{A})$ is the set of all words accepted by $\mathcal{A}$.
	
Next, we recall propositional interaction and configuration logics (cf. \cite{Ma:Co}). For this, we consider a  non-empty finite set $P$ whose elements are called ports. We let $I(P)=\mathcal{P}
	(P)\setminus\{\emptyset\}$, where $\mathcal{P}(P)$ denotes the power set of $P$. The elements of $I(P)$ are called interactions. Then, the syntax of propositional interaction logic
	(PIL for short) formulas over $P$\ is given by the grammar
	$
	\phi::=\mathrm{true}\mid p\mid\overline{\phi}\mid\phi\vee\phi
	$ 
	where $p\in P$. We set $\overline{\overline{\phi}}=\phi$ for every
	PIL formula $\phi$ and $\mathrm{false}=\overline{\mathrm{true}}$. The conjunction
	of two PIL formulas $\phi,\phi^{\prime}$ is defined by $\phi\wedge
	\phi^{\prime}:=\overline{\left(  \overline{\phi}\vee\overline{\phi^{\prime}%
		}\right)  }$. A PIL formula of the form $p_{1}\wedge\ldots\wedge p_{n}$
	with $n>0$, and $p_{i}\in P$ or $p_i=\overline{p'_{i}}$ with $p'_i\in P$ for every $1\leq i\leq
	n$, is called a \emph{monomial}. We denote a monomial
	$p_{1}\wedge\ldots\wedge p_{n}$ by $p_{1}\ldots p_{n}$. 	
	Let $\phi$ be a PIL formula and $a \in I(P)$ an interaction. We define the satisfaction relation 
	$a\models_{\mathrm{PIL}}\phi$, by induction on the structure of $\phi$, in the following way:
\begin{itemize}
\item[-] $a\models_{\mathrm{PIL}} \mathrm{true}$,
 
\item[-] $a\models_{\mathrm{PIL}} p$ iff  $p \in a$,

\item[-] $a\models_{\mathrm{PIL}} \overline{\phi}$  iff  $ a \not
 \models_{\mathrm{PIL}} \phi$,

\item[-] $a\models_{\mathrm{PIL}} \phi_1 \vee \phi_2$ iff  $a \models_{\mathrm{PIL}} \phi_1$ or $a \models_{\mathrm{PIL}} \phi_2.$
\end{itemize}
	
	\noindent It should be clear that
	$a\not \models _{\mathrm{PIL}} \mathrm{false}$ for every $a\in I(P)$. For every interaction $a$ we
	define its characteristic monomial $m_{a}%
	=\bigwedge\nolimits_{p\in a}p\wedge\bigwedge\nolimits_{p\notin a}\overline{p}%
	$. Then, for every $a^{\prime} \in I(P)$ we trivially get $a^{\prime
	}\models_{\mathrm{PIL}}m_{a}$ iff $a'= a$.
	
	The syntax of \emph{propositional configuration logic }(PCL for short) formulas over $P$\ is given by the grammar $
	f::=\mathrm{true}\mid\phi\mid\lnot f\mid f\sqcup f\mid f+f
$ where $\phi$ denotes a PIL formula over $P$. The operators $\lnot$, $\sqcup$, and
	$+$ are called complementation, union, and coalescing,
	respectively. The intersection $\sqcap$ operator is defined  by:  $f_{1}\sqcap f_{2}:=\lnot(\lnot f_{1}\sqcup\lnot f_{2}).$ We set
	$C(P)=\mathcal{P}(I(P))\setminus\{\emptyset\}$. For every PCL formula
	$f$ and $\gamma\in C(P)$ we define the satisfaction relation $\gamma\models
	f$\ inductively on the structure of $f$ in the following way:
	
	\begin{itemize}
	\item[-]	$\gamma\models \mathrm{true}$,
	\item[-]  $\gamma\models\phi$ iff  $a \models_{\mathrm{PIL}}\phi$ for every $a\in\gamma$, 
	\item[-] $\gamma\models\lnot f$ iff  $\gamma\not \models f$,
	\item[-] $ \gamma\models f_{1}\sqcup f_{2}$ iff $\gamma\models f_{1}$ or  $\gamma\models f_{2}$,
\item[-] $\gamma\models f_{1}+f_{2}$ iff there exist $\gamma
		_{1},\gamma_{2}\in C(P)$ such that $\gamma=\gamma_{1}\cup\gamma_{2}$, \\
		\qquad \qquad \qquad \qquad and $\gamma_{1}\models f_{1}$ and $\gamma_{2}\models f_{2}$.
	\end{itemize}
	
	\noindent Then, we trivially get
	
	- $\gamma\models f_{1}\sqcap f_{2}$ \ \ iff \ \ $\gamma\models f_{1}$
	and $\gamma\models f_{2},$ and
	
	- $\gamma\models f_{1}\implies f_{2}$ \ \ iff \ \ $\gamma
	\not \models f_{1}$ or $\gamma\models f_{2}.$
	
	\noindent We define the closure ${\sim} f$ of every
	PCL formula $f$ by
	
	- ${\sim} f:=f+\mathrm{true}$,
	
	\noindent and the disjunction $f_{1}\vee f_{2}$ of two PCL
	formulas $f_{1}$ and $f_{2}$ by
	
	- $f_{1}\vee f_{2}:=f_{1}\sqcup f_{2}\sqcup (f_{1}+f_{2})$.

	\noindent Two PCL formulas $f,f^{\prime}$ are called \emph{equivalent}, and we
		denote it by $f\equiv f^{\prime},$\ whenever $\gamma\models f$\ iff
		$\gamma\models f^{\prime}$\ for every $\gamma\in C(P)$.\\
		Since  $\phi\wedge\phi^{\prime}\equiv\phi\sqcap\phi^{\prime}$ for every interaction formula $\phi$, we denote in the sequel both conjunction
	and intersection operations of PCL formulas with the same symbol $\wedge$.
	
	 We say that a PCL formula $f$ is in \emph{full normal form} if $$f=\bigsqcup\nolimits_{i\in I}\sum\nolimits_{j\in J_{i}}m_{i,j}$$ 
where the
index sets $I$ and $J_{i}$, for every $i\in I$, are finite and $m_{i,j}$'s are
\emph{full monomials}, i.e., monomials of the form $\bigwedge_{p\in P_{+}}p\wedge
\bigwedge_{p\in P\_}\overline{p}$ with $P_{+}\cup P\_=P$ and $P_{+}\cap
P\_=\emptyset$.

	\section{Architectures}
	\label{comp_based_system}
	We introduce  component-based systems and architectures. A component-based system consists of a finite number of components of the same or different type. In most of the well-known component-based systems including BIP \cite{Bl:Al}, REO \cite{Am:RE}, X-MAN \cite{He:Co}, and B \cite{Al:Th} the components  are defined as labelled transition systems (LTS for short). Here, we use the same definition and follow the terminology of BIP framework for the basic notions. The communication among components is implemented  through their corresponding interfaces, i.e, by the associated set of labels, called ports. More precisely, communication of components is defined by interactions, i.e., sets of ports.

		An \emph{atomic component} is an \emph{LTS} $B=(Q,P,q_0,R)$ where $Q$ is a finite
		set of \emph{states}, $P$ is a finite set of \emph{ports}, $q_0$ is the \emph{initial state}, and $R\subseteq Q \times P \times Q$ is the set of \emph{transitions}.

	An atomic component $B$ will be called a \emph{component}, whenever we deal with several atomic components. Furthermore, for every set $\mathcal{B} = \{B(i) \mid   i \in [n] \} $ of components, with $B(i)=(Q(i),P(i),q_{0}(i),R(i))$, $i \in [n]$, we  assume that  $(Q(i) \cup P(i))\cap (Q(i') \cup P(i')) = \emptyset $ 
	for every $1 \leq i\neq i' \leq n$. 
	
	Let $\mathcal{B} = \{B(i) \mid i \in [n]   \} $ be a set of components. We let $P_{\mathcal{B}} =\bigcup_{i \in [n]}  P(i)$ denoting the set of all ports of the elements of $\mathcal{B}$. An \emph{interaction of} $\B$ is an  interaction $a \in I(P_{\mathcal{B}})$ such that  $\vert a \cap P(i)\vert \leq 1$, for every $i \in [n]$. If $p \in a$, then  $p$ is active in $a$. We shall denote by $I_{\mathcal{B}}$ the set of all interactions of $\mathcal{B}$, i.e., 
	$$I_{\mathcal{B}} = \left \{ a \in I(P_{\mathcal{B}}) \mid   \vert a \cap P(i)\vert \leq 1 \text{ for every } i \in [n] \right\}$$
	and  let 
	$$C_{\mathcal{B}} = \mathcal{P}(I_{\mathcal{B}})\setminus \{\emptyset\}.$$
	\begin{definition} \label{BIP-def}An \emph{architecture} is a pair $(\B, f)$ where $\B=\{B(i) \mid i \in [n]  \}$ is a set of components, with $B(i)=(Q(i),P(i),q_{0}(i),R(i))$ for every $i \in [n]$, and $f$  is a \emph{PCL} formula over  $P_{\B}$.
	\end{definition}
	
	In order to define reconfigurable architectures we need to extend the notion of component-based systems\footnote{In fact we use a slightly modified notion of parametric component-systems of \cite{Pi:Ar}}. Specifically, let $\B=\{B(i) \mid i \in [n]  \}$ be a set of component types. For every $i \in [n]$ we consider a natural number $u_i >0$ and for every $j \in [u_i]$  a copy $B(i,j)=(Q(i,j),P(i,j),q_{0}(i,j),R(i,j))$ of $B(i)$ which is called the \emph{$j$-th instance of} $B(i)$. Thus, for every $i \in [n]$ and $j \in [u_i]$, the instance $B(i,j)$ is also a component and we call it a   \emph{component instance}. We impose that $(Q(i,j) \cup P(i,j)) \cap (Q(i', j') \cup  P(i',j')) = \emptyset$ whenever $i \neq i' $ or $j \neq j'$ for every $ i, i' \in [n]$ and $j, j' \in \bigcup_{i\in [n]} [u_i] $.  This restriction is needed in order to identify the distinct component instances. It also permits us to use, without any confusion, the notation  $P(i,j)=\{p(j) \mid p\in P(i)\}$ for every $i \in [n]$ and $j \in [u_i]$.   We denote by $p\mathcal{B}$ the derived component-based system, i.e., $p\mathcal{B}= \{B(i,j) \mid i \in [n],  j \in [u_i] \} $.  The set of ports of $p\B$ is defined by $P_{p\B} =\bigcup_{i \in [n], j \in [u_i]}  P(i,j)$. The set of its interactions is given by 
	$$I_{p\mathcal{B}} = \left \{ a \in I(P_{p\mathcal{B}}) \mid   \vert a \cap P(i,j)\vert \leq 1 \text{ for every } i \in [n], j\in [u_i] \right\}$$
	and  we let 
	$$C_{p\mathcal{B}} = \mathcal{P}(I_{p\mathcal{B}})\setminus \{\emptyset\}.$$
	Next, let $p\mathcal{B}= \{B(i,j) \mid i \in [n],  j \in [u_i]\}$ be a component-based system. An \emph{implementation} $\sigma$ \emph{on} $p\mathcal{B}$ is a sequence of mappings $\sigma=\left(\sigma_l\right)_{l\geq 0}$ with $\sigma_l :I_{p\mathcal{B}} \rightarrow \{\mathrm{active, inactive}\}$. For every $k \geq 0$ and $\gamma \in C_{p\mathcal{B}}$ we set $\sigma_k(\gamma)=\mathrm{active}$ if  $\sigma_k(a)=\mathrm{active}$ for every $a \in \gamma$, otherwise we set $\sigma_k(\gamma)=\mathrm{inactive}$.

	\begin{definition}
		A \emph{dynamic reconfigurable architecture} (\emph{dra} for short) is a triple $(p\B, f, \sigma)$ where $p\mathcal{B}=\{B(i,j) \mid i \in [n],  j \in [u_i] \} $ is a component-based system, $f$ a \emph{PCL} formula over $P_{p\mathcal{B}}$, and $\sigma$ an implementation on $p\B$.  
	\end{definition}
	
	\noindent For every $l\geq 0$ we set 
	$$\gamma(\sigma_l)=\{a \in I_{p\B} \mid \sigma_l(a)=\act\}$$
	and define the satisfaction of $f$ by $\sigma_l$ as follows: 
	$$\sigma_l \models f  \text{  \ \ \  iff \ \ \   }\gamma(\sigma_l) \models f.$$ 
	Then, we call the implementation $\sigma$ \emph{trustworthy for} $f$ if there is a natural number $g\geq 0$ such that $\sigma_l \models f$ for every $l>g$. The meaning of $\sigma$ being trustworthy is that after a "time" $g$ the set of  active interactions, determined by $\sigma_l$, for every $l>g$, satisfies the formula $f$, and thus the architecture described by $f$ is always  rightly implemented.

\begin{remark}\label{rem_imp}
From the constructions above, every implementation $\sigma=\left(\sigma_l\right)_{l \geq 0}$ on $p\B$ defines a sequence $\gamma=\left( \gamma(\sigma_l)\right)_{l\geq0}$ of sets of interactions in $C_{p\B}$. Conversely, every sequence $\gamma=\left( \gamma_l\right)_{l\geq0}$ of elements in $C_{p\B}$ determines an implementation $\sigma(\gamma)=\left(\sigma(\gamma)_l\right)_{l \geq 0}$ on $p\B$ with $\sigma(\gamma)_l(a)=\mathrm{active}$ if $a \in \gamma_l$, and $\sigma(\gamma)_l(a)=\mathrm{inactive}$ otherwise, for every $a \in I_{p\B}$. Therefore, there is a one-to-one correspondence of implementations on $p\B$ and sequences of sets of  interactions in $C_{p\B}$. 
\end{remark}	

\subsection{Examples}
	In the sequel, we present PCL formulas of several well-known architectures for component-based systems with arbitrarily many components.   
	
	\begin{example}
		We present  a \emph{PCL} formula for the \emph{Master-Slave}  architecture considered as dynamic reconfigurable. This architecture is composed by two component types, namely \emph{masters} denoted by $B(1)$ and \emph{slaves} denoted by $B(2)$. Both of them have one port each, denoted respectively by $m$ and $s$. Connections are permitted only among masters and slaves via interactions of type
		$\{m,s\}$. Moreover, every slave instance can be connected to a unique master instance. An
		instantiation of the architecture for two masters and two slaves with all the possible cases of
		the allowed interactions if all slaves are connected is shown in Figure \ref{m-s}.
		\begin{figure}[h]
			\centering
			\resizebox{1.0\linewidth}{!}{
				\begin{tikzpicture}[>=stealth',shorten >=1pt,auto,node distance=1cm,baseline=(current bounding box.north)]
					\tikzstyle{component}=[rectangle,ultra thin,draw=black!75,align=center,inner sep=9pt,minimum size=2.5cm,minimum height=1.8cm, minimum width=2.7cm]
					\tikzstyle{port}=[rectangle,ultra thin,draw=black!75,minimum size=7.5mm]
					\tikzstyle{bubble} = [fill,shape=circle,minimum size=5pt,inner sep=0pt]
					\tikzstyle{type} = [draw=none,fill=none]

					\node [component,align=center] (a1)  {};
					\node [port] (a2) [below=-0.76cm of a1]  {
						$m$};
					\node[bubble] (a3) [below=-0.105cm of a1]   {};

					\node[type]  [above=-0.6cm of a1]{{\small Master $B(1,1)$}};
					\node [component] (a4) [below=2cm of a1]  {};
					\node [port,align=center,inner sep=5pt] (a5) [above=-0.75cm of a4]  {$s$};
					\node[bubble] (a6) [above=-0.105cm of a4]   {};
						\node []  (w2)  [below=0.5cm of a6]  {};
				
					\node[type]  [below=-0.6cm of a4]{{\small Slave $B(2,1)$}};
					\path[-]          (a1)  edge                  node {} (a4);
					\node [component] (b1) [right=1cm of a1] {};
					\node [port] (b2) [below=-0.76cm of b1]  {$m$};
					\node[bubble] (b3) [below=-0.105cm of b1]   {};
				
					\node[type]  [above=-0.6cm of b1]{{\small Master $B(1,2)$}};

					\node [component] (b4) [below=2cm of b1]  {};
					\node [port,align=center,inner sep=5pt] (b5) [above=-0.75cm of b4]  {$s$};
					\node[bubble] (b6) [above=-0.105cm of b4]   {};
					\node[type]  [below=-0.6cm of b4]{{\small Slave $B(2,2)$}};
					
					\path[-]          (b1)  edge                  node {} (b4);

					\node [component] (c1)[right=1cm of b1] {};
					\node [port] (c2) [below=-0.76cm of c1]  {$m$};
					\node[bubble] (c3) [below=-0.105cm of c1]   {};
					
					\node[type]  [above=-0.6cm of c1]{{\small Master $B(1,1)$}};

					\node [component] (c4) [below=2cm of c1]  {};
					\node [port,align=center,inner sep=5pt] (c5) [above=-0.75cm of c4]  {$s$};
					\node[bubble] (c6) [above=-0.105cm of c4]   {};
			
					\node[type]  [below=-0.6cm of c4]{{\small Slave $B(2,1)$}};
					
					\path[-]          (c1)  edge                  node {} (c4);

					\node [component] (d1)[right=1cm of c1] {};
					\node [port] (d2) [below=-0.76cm of d1]  {$m$};
					\node[bubble] (d3) [below=-0.105cm of d1]   {};
				
					\node[type]  [above=-0.6cm of d1]{{\small Master $B(1,2)$}};

					\node [component] (e4) [below=2cm of d1]  {};
					\node [port,align=center,inner sep=5pt] (e5) [above=-0.75cm of e4]  {$s$};
					\node[] (i1) [above right=-0.25 cm and -0.25cm of e5]   {};
					\node[bubble] (e6) [above=-0.105cm of e4]   {};
				
					\node[type]  [below=-0.6cm of e4]{{\small Slave $B(2,2)$}};
					
					\path[-]          (c3)  edge  node {}           (i1);

					\node [component] (f1)[right=1cm of d1] {};
					\node [port] (f2) [below=-0.76cm of f1]  {$m$};
					\node[bubble] (f3) [below=-0.105cm of f1]   {};
				
					\node[type]  [above=-0.6cm of f1]{{\small Master $B(1,1)$}};

					\node [component] (g4) [below=2cm of f1]  {};
					\node [port,align=center,inner sep=5pt] (g5) [above=-0.75cm of g4]  {$s$};
					\node[] (i2) [above left=-0.25 cm and -0.25cm of g5]   {};
					\node[bubble] (g6) [above=-0.105cm of g4]   {};
					 
					\node[type]  [below=-0.6cm of g4]{{\small Slave $B(2,1)$}};

					\node [component] (h1)[right=1cm of f1] {};
					\node [port] (h2) [below=-0.76cm of h1]  {$m$};
					\node[bubble] (h3) [below=-0.105cm of h1]   {}; 
					\node[type]  [above=-0.6cm of h1]{{\small Master $B(1,2)$}};

					\node [component] (j4) [below=2cm of h1]  {};
					\node [port,align=center,inner sep=5pt] (j5) [above=-0.75cm of j4]  {$s$};
					\node[] (i3) [above right=-0.25 cm and -0.25cm of j5]   {};
					\node[bubble] (j6) [above=-0.105cm of j4]   {}; 
					\node[type]  [below=-0.6cm of j4]{{\small Slave $B(2,2)$}};
					
					\path[-]          (h3)  edge                  node {} (i2);
					
					\path[-]          (f3)  edge                  node {} (i3);

					\node [component] (k1)[right=1cm of h1] {};
					\node [port] (k2) [below=-0.76cm of k1]  {$m$};
					\node[bubble] (k3) [below=-0.105cm of k1]   {};
					\node[type]  [above=-0.6cm of k1]{{\small Master $B(1,1)$}};

					\node [component] (k4) [below=2cm of k1]  {};
					\node [port,align=center,inner sep=5pt] (k5) [above=-0.75cm of k4]  {$s$};
					\node[] (i4) [above left=-0.25 cm and -0.25cm of k5]   {};
					\node[bubble] (k6) [above=-0.105cm of k4]   {};
					\node[type]  [below=-0.6cm of k4]{{\small Slave $B(2,1)$}};

					\node [component] (l1)[right=1cm of k1] {};
					\node [port] (l2) [below=-0.76cm of l1]  {$m$};
					\node[bubble] (l3) [below=-0.105cm of l1]   {};; 
					\node[type]  [above=-0.6cm of l1]{{\small Master $B(1,2)$}};

					\node [component] (m4) [below=2cm of l1]  {};
					\node [port,align=center,inner sep=5pt] (m5) [above=-0.75cm of m4]  {$s$};
					\node[] (i5) [above right=-0.25 cm and -0.45cm of m5]   {};
					\node[bubble] (m6) [above=-0.105cm of m4]   {}; 
					\node[type]  [below=-0.6cm of m4]{{\small Slave $B(2,2)$}};
					
					\path[-]          (l3)  edge                  node {} (i4);
					
					\path[-]          (l3)  edge                  node {} (i5);

			\end{tikzpicture}}
			\caption{ Master-Slave architecture.}
			\label{m-s}
		\end{figure}
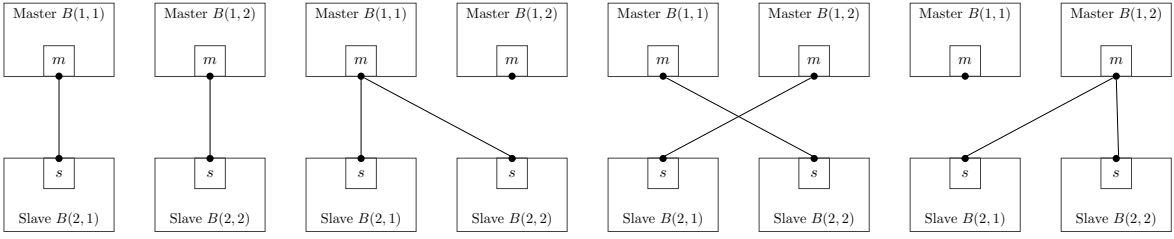

		The \emph{PIL} formula for the interaction among a master instance $B(1,k)$ and a slave instance $B(2,j)$ is given by
		$$\phi_{ms}(m(k),s(j))=m_{\{m(k),s(j)\}}.$$
		We consider the \emph{PCL} formula
		$$f=\bigvee_{j\in [u_2]}\left(\bigsqcup_{k\in[u_1]}\phi_{ms}(m(k),s(j))\right).$$
		Assume that $\gamma \in C_{p\B}$ satisfies $f$. This implies that there is at least one slave instance $B(2,j)$ ($j \in [u_2]$), which is connected to a master instance $B(1,k)$ ($k \in [u_1]$), hence the architecture is active. \\
		
	\end{example}

	\begin{example}
		We develop a \emph{PCL} formula for the well-known \emph{Publish-Subscribe} architecture (cf. \cite{Ba:Dy,Ba:On,Eg:Pu,Ol:AP,Pa:Pu,Ya:Pr,Zh:To}). For this, we define three types of components namely \emph{publishers} denoted by $B(1)$, \emph{topics} denoted by $B(2)$ and \emph{subscribers} denoted by $B(3)$. Publishers have one port namely $p$, topics have two ports namely $t_p$ and $t_s$, and subscribers one port namely $s$. Publishers are connected to topics via interactions of type $\{p,t_p\}$, and subscribers are connected to topics via interactions of type $\{s,t_s\}$.\\ 	
		 The \emph{PIL} formula for the interaction between a publisher instance $B(1,k)$ ($k \in [u_1]$) and a topic instance $B(2,j)$ ($j \in [u_2]$) is given by the \emph{PIL} formula 
		$$\phi_{pt_p}(p(k),t_p(j))= m_{\{p(k), t_p(j)\}}.$$ 
		The interaction among a subscriber instance $B(3,r)$ ($r \in [u_3])$ and a topic instance $B(2,j)$ ($j \in [u_2]$) is given by the \emph{PIL} formula 
		$$\phi_{st_s}(s(r),t_s(j))= m_{\{s(r), t_s(j)\}}.$$   
		Next, we let 
		$$\zeta(t_p(j))= \bigvee_{k \in [u_1]}\phi_{pt_p}(p(k),t_p(j))$$
		and
		$$\zeta(s(r))=\bigvee_{j \in [u_2]}(\phi_{st_s}(s(r),t_s(j))+\zeta(t_p(j))).$$
		We conclude to the \emph{PCL} formula
		$$f=\bigvee_{r \in [u_3]}\zeta(s(r))\vee \bigvee_{j\in [u_2]}\zeta (t_p(j)).$$
		One can easily see that if there is at least one subscriber instance $B(3,r)$ ($r \in [u_3]$)  connected to a topic instance $B(2,j)$ ($j \in [u_2]$)  which in turn is connected to a publisher instance $B(1,k)$ ($k \in [u_1]$), or there is at least one topic instance $B(2,j)$ ($j \in [u_2]$) connected to a publisher instance $B(1,k)$ ($k \in [u_1]$), then the corresponding set $\gamma \in C_{p\B}$ of interactions satisfies the \emph{PCL} formula $f$. A snapshot of the dynamic reconfigurable system with $u_1$ publishers, $u_2$ topics and $u_3$ subscribers, where the active interactions satisfy the formula $f$ is shown in Figure \ref{pub-sub}. \\
		
		  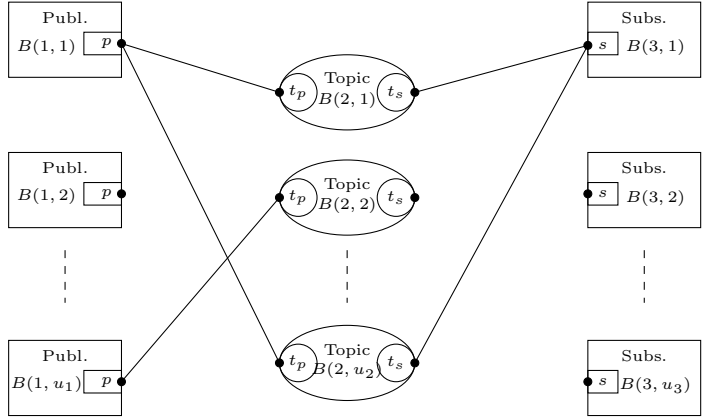
\begin{figure} 
			\begin{center}
				\begin{tikzpicture}
					
					\draw  (-0.5,0) rectangle  (1,1) ;
					\node at (0.25,0.8) {\tiny Publ.};
					\node at (0.005,0.4) {\tiny $B(1,1)$};
					\draw  (0.5,0.3) rectangle (1,0.6);
					\node at (0.8,0.45) {\tiny $p$};
					\draw[fill] (1,0.45) circle [radius=1.5pt];

					\draw  (-0.5,-2) rectangle  (1,-1) ;
					\node at (0.25,-1.2) {\tiny Publ.};
					\node at (0.005,-1.58) {\tiny $B(1,2)$};
					\draw  (0.5,-1.7) rectangle (1,-1.4);
					\node at (0.8,-1.55) {\tiny $p$};
					\draw[fill] (1,-1.55) circle [radius=1.5pt];

					\draw[dashed] (0.25,-2.3)--(0.25,-3);
					
					\draw  (-0.5,-3.5) rectangle  (1,-4.5) ;
					\node at (0.25,-3.7) {\tiny Publ.};
					\node at (0.005,-4.08) {\tiny $B(1,u_1)$};
					\draw  (0.5,-4.2) rectangle (1,-3.9);
					\node at (0.8,-4.05) {\tiny $p$};
					\draw[fill] (1,-4.05) circle [radius=1.5pt];

					\draw (4,-0.2) ellipse (0.9cm and 0.5cm);
					\node at (4,-0.02) {\tiny Topic};
					\node at (4,-0.3) {\tiny $B(2,1)$};
					\draw (3.35,-0.2) ellipse (0.25cm and 0.25cm);
					\node at (3.35,-0.2) {\tiny $t_p$};
					\draw (4.65,-0.2) ellipse (0.25cm and 0.25cm);
					\node at (4.65,-0.2) {\tiny $t_s$};
					\draw[fill] (3.1,-0.2) circle [radius=1.5pt];
					\draw[fill] (4.9,-0.2) circle [radius=1.5pt];

					\draw (4,-1.6) ellipse (0.9cm and 0.5cm);
					\node at (4,-1.44) {\tiny Topic};
					\node at (4,-1.7) {\tiny $B(2,2)$};
					\draw (3.35,-1.6) ellipse (0.25cm and 0.25cm);
					\node at (3.35,-1.6) {\tiny $t_p$};
					\draw (4.65,-1.6) ellipse (0.25cm and 0.25cm);
					\node at (4.65,-1.6) {\tiny $t_s$};
					\draw[fill] (3.1,-1.6) circle [radius=1.5pt];
					\draw[fill] (4.9,-1.6) circle [radius=1.5pt];

					\draw[dashed] (4,-2.3)--(4,-3);
					
					\draw (4,-3.8) ellipse (0.9cm and 0.5cm);
					\node at (4,-3.64) { \tiny Topic};
					\node at (4,-3.9) { \tiny $B(2,u_2)$};
					\draw (3.35,-3.8) ellipse (0.25cm and 0.25cm);
					\node at (3.35,-3.8) {\tiny $t_p $};
					\draw (4.65,-3.8) ellipse (0.25cm and 0.25cm);
					\node at (4.65,-3.8) {\tiny $t_s$};
					\draw[fill] (3.1,-3.8) circle [radius=1.5pt];
					\draw[fill] (4.9,-3.8) circle [radius=1.5pt];

					\draw  (7.2,0) rectangle  (8.7,1) ;
					\node at (7.95,0.8) {\tiny Subs.};
					\node at (8.1,0.4) {\tiny $B(3,1)$};
					\draw  (7.2,0.3) rectangle (7.6,0.6);
					\node at (7.4,0.425) {\tiny $s$};
					\draw[fill] (7.2,0.425) circle [radius=1.5pt];
					
					\draw  (7.2,-2) rectangle  (8.7,-1);
					\node at (7.95,-1.2) {\tiny Subs.};
					\node at (8.1,-1.6) {\tiny $B(3,2)$};
					\draw  (7.2,-1.7) rectangle (7.6,-1.4);
					\node at (7.4,-1.55) {\tiny $s$};
					\draw[fill] (7.2,-1.55) circle [radius=1.5pt];

					\draw[dashed] (7.95,-2.3)--(7.95,-3);

					\draw  (7.2,-3.5) rectangle  (8.7,-4.5) ;
					\node at (7.95,-3.7) {\tiny Subs.};
					\node at (8.1,-4.1) {\tiny $B(3,u_3)$};
					\draw  (7.2,-4.2) rectangle (7.6,-3.9);
					\node at (7.4,-4.05) {\tiny $s$};
					\draw[fill] (7.2,-4.05) circle [radius=1.5pt];

					\draw (1,0.45)--(3.1,-0.2);
					\draw  (1,0.45)--(3.1,-3.8);
					
					\draw (4.9,-0.2)--(7.2,0.425);

					\draw (4.9,-3.8)--(7.2,0.425);			
					
					\draw (1,-4.05)--(3.1,-1.6);	
				\end{tikzpicture}  
				
			\end{center}
			\caption{ Publish-Subscribe architecture.}
			\label{pub-sub}
		\end{figure}
	\end{example}
	\begin{example}
		
We investigate the \emph{Pipes-Filters} architecture. It involves two types of components, the \emph{Pipes}, which serve as connectors for the stream of data being transformed, and the \emph{Filters}, which perform transformations on data and process the input they receive, denoted respectively, by $B(1)$ and $B(2)$. Both of them have two ports namely $in_p, out_p$ and $in_f, out_f$,  respectively. Every filter is connected with two seperate unique pipes via ineractions of type $\{in_f,out_p\}$ and $\{out_f,in_p\}$. Also, every pipe  can be connected to at most one filter via an interaction $\{in_f,out_p\}$ and there are at least two distinguished connected pipe instances that are the ends of the pipeline, i.e., the $in_p$ of the first component  and the $out_p$ of the last component are not connected with other instances.\\
		The interaction of type $\{in_f,out_p\}$ among  a filter instance $B(2,j)$ ($j \in [u_2])$ and a pipe instance $B(1,k)$ ($k \in [u_1]$) is given by the \emph{PIL} formula 
		$$\phi_{in_fout_p}(in_f(j),out_p(k))= m_{\{in_f(j), out_p(k)\}}.$$   
		Respectively, the interaction of type $\{out_f,in_p\}$ among  a filter instance $B(2,j)$ ($j \in [u_2])$ and a pipe instance $B(1,k)$ ($k \in [u_1]$) is given by the \emph{PIL} formula 
		$$\phi_{out_fin_p}(out_f(j),in_p(k))= m_{\{out_f(j), in_p(k)\}}.$$   
		Now, we will built the \emph{PCL} formula $f$ that describes the Pipes-Filters architecture. 
		At first, we consider the \emph{PCL} formula
		$$\zeta_1(in_f(j),out_f(j))=  \bigsqcup_{i_1\in[u_1]\atop i_2\in[u_1]\setminus\{i_1\}} \left(\phi_{in_fout_p}(in_f(j),out_p(i_1))+ \phi_{out_fin_p}(out_f(j),in_p(i_2)) \wedge\xi \right)  \text{,} $$\\
		where
		$$\xi= \bigwedge_{i_1'\in [u_1]\setminus\{i_1\}} \neg {\phi_{in_fout_p}\left( in_f(j),out_p(i_1')\right) }\wedge\bigwedge_{i_2'\in [u_1]\setminus\{i_2\}}\neg {\phi_{out_fin_p}\left( out_f(j),in_p(i_2')\right) }.$$\\
		If $j\in[u_2]$, then the formula $\zeta_1(in_f(j),out_f(j))$ express that there are some dinstict indices $i_1,i_2 \in[u_1]$ such that the filter instance $B(2,j)$ is connected with the pipe instances $B(1,i_1)$,$B(1,i_2)$ via an interaction $\{in_f,out_p\}$ and $\{out_f,in_p\}$ respectively, besides the subformula $\xi$ refers that these two pipes are unique.
	 	\begin{figure}[h]

			\centering
			\resizebox{1.0\linewidth}{!}{
				\begin{tikzpicture}[>=stealth',shorten >=1pt,auto,node distance=1cm,baseline=(current bounding box.north)]
					\tikzstyle{component}=[rectangle,ultra thin,draw=black!75,align=center,inner sep=9pt,minimum size=1.5cm,minimum width=2.5cm,minimum height=2cm]
					\tikzstyle{port}=[rectangle,ultra thin,draw=black!75,minimum size=6mm]
					\tikzstyle{bubble} = [fill,shape=circle,minimum size=5pt,inner sep=0pt]
					\tikzstyle{type} = [draw=none,fill=none]

					\node [component] (a1) {};
					
					\node[bubble] (a2) [right=-0.105cm of a1]   {};   
					\node [port] (a3) [right=-0.805cm of a1]  {$in_p$};

					\node[bubble] (a4) [left=-0.105cm of a1]   {};   
					\node [port] (a5) [left=-0.99cm of a1]  {$out_p$};  
					
					\node[type]  [above=-0.6cm of a1]{{\small Pipe $B(1,1)$}};

					\node [component] (b2) [right=2cm of a1]  {};
					\node[bubble] (b3) [right=-0.105cm of b2]   {};   
					\node [port] (b4) [right=-0.805cm  of b2]  {$in_f$};

					\node[bubble] (b5) [left=-0.105cm of b2]   {};   
					\node [port] (b6) [left=-0.99cm of b2]  {$out_f$};  
					
					\node[type]  [above=-0.6cm of b2]{{\small Filter $B(2,1)$}};
					\path[-]          (a3)  edge                  node {} (b6);

					\node [component] (c2) [right=2cm of b2]{};
					
					\node[bubble] (c3) [right=-0.105cm of c2]   {};   
					\node [port] (c4) [right=-0.805cm  of c2]  {$in_p$};

					\node[bubble] (c5) [left=-0.105cm of c2]   {};   
					\node [port] (c6) [left=-0.99cm of c2]  {$out_p$};  
					
					\node[type]  [above=-0.6cm of c2]{{\small Pipe $B(1,2)$}};
					
					\path[-]          (b4)  edge                  node {} (c6);

					\node [component] (d2) [above right= 0.5cm and 2cm of c2]  {};
					\node[bubble] (d3) [right=-0.105cm of d2]   {};  
					\node [port] (d4) [right=-0.805cm  of d2]  {$in_f$};

					\node[bubble] (d5) [left=-0.105cm of d2]   {};   
					\node [port] (d6) [left=-0.99cm of d2]  {$out_f$};  
					\node[] (i1) [above left=-0.3 cm and -0.20cm of d6]   {};
					
					\node[type]  [above=-0.6cm of d2]{Filter $B(2,2)$};
					\path[-]          (c3)  edge                  node {} (i1);

					\node [component] (e2) [below right= 0.5cm and 2cm of c2]  {};
					\node[bubble] (e3) [right=-0.105cm of e2]   {};   
					\node [port] (e4) [right=-0.805cm  of e2]  {$in_f$};

					\node[bubble] (e5) [left=-0.105cm of e2]   {};   
					\node [port] (e6) [left=-0.99cm  of e2]  {$out_f$}; 
					\node[] (i2) [above left=-0.62 cm and -0.30cm of e6]   {};

					\node[type]  [above=-0.6cm of e2]{{\small Filter $B(2,3)$}};
					\path[-]          (c3)  edge                  node {} (i2);

					\node [component] (f2) [right=2cm of d2]{};
					
					\node[bubble] (f3) [right=-0.105cm of f2]   {};   
					\node [port] (f4) [right=-0.805cm  of f2]  {$in_p$};

					\node[bubble] (f5) [left=-0.105cm of f2]   {};   
					\node [port] (f6) [left=-0.99cm  of f2]  {$out_p$};  
					\node[] (i3) [above left=-0.43 cm and -0.25cm of f6]   {};

					\node[type]  [above=-0.6cm of f2]{{\small Pipe $B(1,3)$}};
					
					\path[-]          (d3)  edge                  node {} (i3);
					
					\node [component] (g2) [right=2cm of e2]{};
					
					\node[bubble] (g3) [right=-0.105cm of g2]   {};   
					\node [port] (g4) [right=-0.805cm  of g2]  {$in_p$};

					\node[bubble] (g5) [left=-0.105cm of g2]   {};   
					\node [port] (g6) [left=-0.99cm  of g2]  {$out_p$};  
					\node[] (i4) [above left=-0.43 cm and -0.25cm of g6]   {};

					\node[type]  [above=-0.6cm of g2]{{\small Pipe $B(1,4)$}};
					
					\path[-]          (e3)  edge                  node {} (i4);
					
			\end{tikzpicture}}
			\caption{ Pipes-Filters architecture}
			\label{p-f}
		\end{figure}
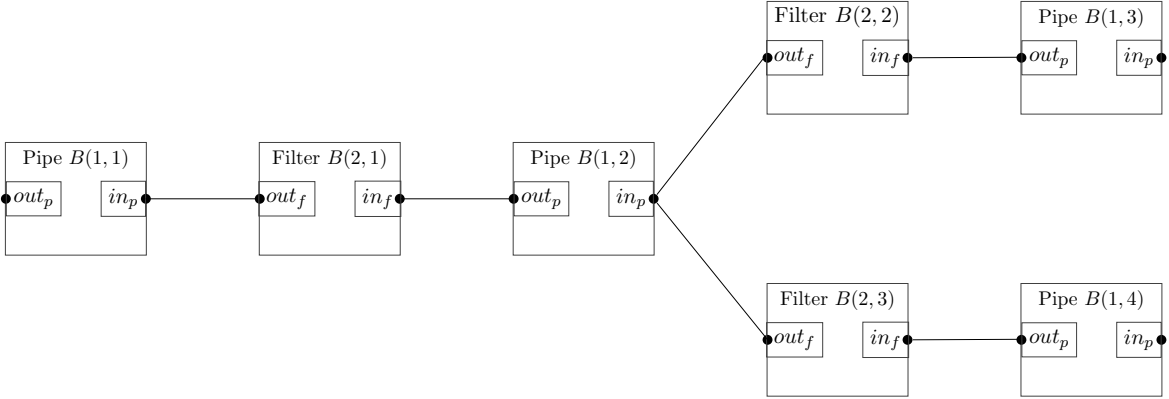 
	    \noindent Next, we let 
	      $$\zeta_2(out_p(r))= \bigsqcup_{j_1 \in [u_2]}\left( \bigwedge_{j_2 \in [u_2] \setminus \{j_1\}} \neg{\phi_{in_fout_p}(in_f(j_2),out_p(r))}\right).$$\\
	    If $r\in[u_1]$. Then this formula describes that the output of the pipe instance  $B(1,r) (r\in[u_1])$, can be connected to at most one a filter instance $B(2,j_1)$($j_1\in[u_2]$).
	    Moreovoer, we define the formulas
    
    $$\zeta_3\left(k_1\right) =\bigsqcup_{i_1\in [u_2]}  \left( \phi_{in_fout_p}(in_f(i_1),out_p(k_1))\right) \wedge \\ \bigwedge_{i\in [u_2]} \left( \neg {\phi_{out_fin_p}(out_f(i),in_p(k_1))}\right).$$
    
   $$\zeta_4\left(k_2\right) =\bigsqcup_{i_2\in [u_2]}  \left( \phi_{out_fin_p}(out_f(i_2),in_p(k_2))\right) \wedge \\ \bigwedge_{i\in [u_2]}\left( \neg {\phi_{in_fout_p}(in_f(i),out_p(k_2))}\right).$$\\
  Let $k_1,k_2\in[u_1]$. The formula $\zeta_3\left(k_1\right)$ (resp. $\zeta_4\left(k_2\right)$) interpret that the output (resp. input) of the pipe $B(1,k_1)$ (resp. $B(1,k_2)$) interacts with the the input (resp. output) of a filter but the input (resp. output) of this pipe is disconnected.\\
  We conclude to the \emph{PCL} formula 
	$$
	 f = \sum_{j \in [u_2]}\zeta_1(in_f(j),out_f(j)) \wedge \\ \bigwedge_{r \in [u_1]}\zeta_2\left( out_p(r)\right) \wedge  \bigsqcup_{k_1\in[u_1]\atop k_2\in[u_1]\setminus\{k_1\}}\left( \zeta_3\left(k_1\right)\wedge \zeta_4\left(k_2\right)\right).
    $$\\
	If $\gamma \in C_{p\B}$ satisfies $f$, then it is clear that every filter instance $B(2,j)$ ($j \in [u_2]$)  is connected with two pipe instances $B(1,i_1)$, $B(1,i_2)$ ($i_1,i_2\in[u_1]$), so the architecture is active. An instantiation of system with four pipes and three filters at a time $g\geq0,$ where the set of the active interactions follow the pipe-filter architecture, is given in Figure \ref{p-f}.
		
	\end{example}	
	
	\begin{example}
		Next we will develop a \emph{PCL} formula for the \emph{Star architecture}. This architecture contains only one component type $B(1)$ with a unique port $p$. There is an instance considered as the \emph{central component}. The rest instances can  be connected to the central one, with the restriction that at least one of them is always connected to it, so that the architecture is active. No other interactions are permitted.
	
		\begin{figure}[h]
			\centering
			\resizebox{0.6\linewidth}{!}{
				\begin{tikzpicture}[>=stealth',shorten >=1pt,auto,node distance=1cm,baseline=(current bounding box.north)]
					\tikzstyle{component}=[rectangle,ultra thin,draw=black!75,align=center,inner sep=9pt,minimum size=1.8cm]
					\tikzstyle{port}=[rectangle,ultra thin,draw=black!75,minimum size=6mm]
					\tikzstyle{bubble} = [fill,shape=circle,minimum size=5pt,inner sep=0pt]
					\tikzstyle{type} = [draw=none,fill=none] 
					
					\node [component] (a1) {};
					\node [port] (a2) [above=-0.605cm of a1]  {$p$};
					\node[bubble] (a3) [above=-0.105cm of a1]   {};

					\node []  (s1)  [below=0.75cm of a3]  {$B(1,1)$}; 
					
					\node [component] (a4) [above left =2.cm and 4cm of a1]  {};
					\node [port] (a5) [below=-0.605cm of a4]  {$p$};
					\node[] (i1) [above left=-0.15 cm and -0.73cm of a4]   {};
					\node[bubble] (a6) [below=-0.105cm of a4]   {};

					\node []  (s2)  [above=0.75cm of a6]  {$B(1,2)$}; 
					
					\path[-]          (a3)  edge                  node {} (a6);
					\path[-]          (a6)  edge                  node {} (a3);
					
					\node [component] (b4) [right =2 cm of a4]  {};
					\node [port] (b5) [below=-0.605cm of b4]  {$p$};
					\node[] (i2) [above left=-0.25 cm and -0.30cm of b5]   {};
					\node[bubble] (b6) [below=-0.105cm of b4]   {};
					
					\node []  (s3)  [above=0.75cm of b6]  {$B(1,3)$};

					\path[-]          (a3)  edge                  node {} (b6);
					\path[-]          (b6)  edge                  node {} (a3);
					
					\node [component] (c4) [right =2 cm of b4]  {};
					\node [port] (c5) [below=-0.605cm of c4]  {$p$};
					\node[] (i3) [above right=-0.20 cm and -0.38cm of c5]   {};
					\node[bubble] (c6) [below=-0.105cm of c4]   {};
					
					\node []  (s4)  [above=0.75cm of c6]  {$B(1,4)$};

					\node [component] (d4) [right =2 cm of c4]  {};
					\node [port] (d5) [below=-0.605cm of d4]  {$p$};
					\node[] (i4) [above right=-0.20 cm and -0.27cm of d5]   {};
					\node[bubble] (d6) [below=-0.105cm of d4]   {};
					
					\node []  (s5)  [above=0.75cm of d6]  {$B(1,5)$}; 
					\path[-]          (a3)  edge                  node {} (d6);
					\path[-]          (d6)  edge                  node {} (a3);

			\end{tikzpicture}}
			\caption{Star architecture.}
			\label{star}
		\end{figure}
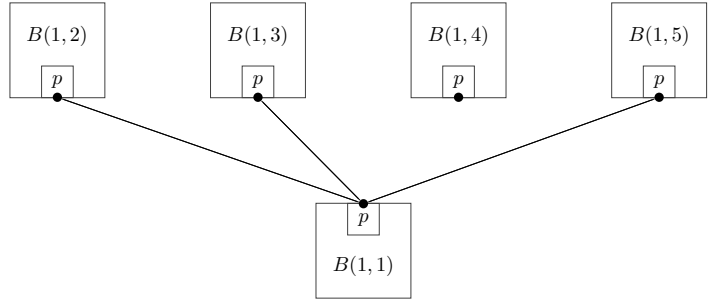

\noindent		The \emph{PIL} formula describing the interaction among the instances $B(1,j)$ and $B(1,i)$ ($j,i\in[u_1]$) is given by 
$$\phi_{pp}(p(j),p(i))=m_{\{p(j),p(i)\}}.$$\\
Then,		the \emph{PCL} formula for Star architecture is as follows
		$$f=\bigsqcup_{j\in[u_1]}\bigvee_{i\in[u_1]\setminus\{j\}}\phi_{pp}(p(j),p(i)).$$		
If the  unique central component $B(1,j)$ for some $j\in[u_1]$ is connected to a separate instance $B(1,i)$ ($i\in[u_1]$),	 then the corresponding  configuration $\gamma \in C_{p\B}$ satisfies the \emph{PCL} formula $f$.
	\end{example}

	\begin{example}

		\definecolor{harlequin}{rgb}{0.25, 1.0, 0.0}
		\definecolor{ao}{rgb}{0.0, 0.0, 1.0}
		
		\definecolor{darkorange}{rgb}{1.0, 0.55, 0.0}
		\definecolor{harlequin}{rgb}{0.25, 1.0, 0.0}
		\definecolor{ao}{rgb}{0.0, 0.0, 1.0}

		\begin{figure}[h]
			\centering
			\resizebox{0.65\linewidth}{!}{
				\begin{tikzpicture}[>=stealth',shorten >=1pt,auto,node distance=1cm,baseline=(current bounding box.north)]
					\tikzstyle{component}=[rectangle,ultra thin,draw=black!75,align=center,inner sep=9pt,minimum size=1.5cm,minimum height=3.4
					cm,minimum width=2.6cm]
					\tikzstyle{port}=[rectangle,ultra thin,draw=black!75,minimum size=6mm]
					\tikzstyle{bubble} = [fill,shape=circle,minimum size=5pt,inner sep=0pt]
					\tikzstyle{type} = [draw=none,fill=none] 
					
					\node [component,align=center] (a1)  {};
					\node [port] (a2) [above right= -1.6cm and -0.62cm of a1]  {$b_1$};
					\node[bubble] (a3) [above right=-1.35cm and -0.08cm of a1]   {};

					\node [port] (a4) [above right= -2.8cm and -0.63cm of a1]  {$b_2$};
					\node[bubble] (a5) [above right=-2.57cm and -0.08cm of a1]   {};

					\node[type] (a6) [above right=-1.5cm and -2.4cm of a1]{Blackb.};
					\node            [below=-0.1cm of a6]{$B(1,1)$};

					\tikzstyle{control}=[rectangle,ultra thin,draw=black!75,align=center,inner sep=9pt,minimum size=1.5cm,minimum height=2.3cm,minimum width=3.5cm]
					
					\node [control,align=center] (b1)  [below right=8cm and 2.5cm of a1]{};
					\node [port] (b2) [above left=-0.605cm  and -1.65 of b1]  {$c_1$};
					\node[bubble] (b3) [above left=-0.1cm and -0.35cm of b2]   {};

					\node [port] (b6) [right=0.4 of b2]  {$c_2$};
					\node[bubble] (b7) [above left=-0.1cm and -0.35cm of b6]   {};

					\node[type] (b8) [below=-1.2cm of b1]{Contr.};
					\node[]      [below=-0.1cm of b8]{$B(3,1)$};
					
					\tikzstyle{source}=[rectangle,ultra thin,draw=black!75,align=center,inner sep=9pt,minimum size=1.5cm,minimum height=3.5cm,minimum width=3cm]
					
					\tikzstyle{port}=[rectangle,ultra thin,draw=black!75,minimum size=6mm,inner sep=3pt]
					
					\node [source,align=center] (c1) [above right=-1cm and 9cm of a1] {};
					\node [port] (c2) [above left= -1.7cm and -0.6cm of c1]  {$s_1$};
					\node[bubble] (c3) [above left=-0.35cm and -0.08cm of c2]   {};

					\tikzstyle{port}=[rectangle,ultra thin,draw=black!75,minimum size=6mm,inner sep=2.7pt]
					\node [port] (c6) [above left= -1.6cm and -0.6cm of c2]  {{\small $s_2$}};
					\node[bubble] (c7) [above left=-0.35cm and -0.08cm of c6]   {};

					\node[type,align=center] (c8) [above right=-1.5cm and -1.6cm of c1]{Sour.};
					\node[] [below=-0.1cm of c8]{$B(2,1)$};
					
					\tikzstyle{port}=[rectangle,ultra thin,draw=black!75,minimum size=6mm,inner sep=3pt]
					\node [source,align=center] (d1) [below= 1cm of c1] {};
					\node [port] (d2) [above left= -1.7 cm and -0.6cm of d1]  {$s_1$};
					\node[bubble] (d3) [above left=-0.35cm and -0.08cm of d2]   {};

					\tikzstyle{port}=[rectangle,ultra thin,draw=black!75,minimum size=6mm,inner sep=2.7pt]
					\node [port] (d6) [above left= -1.6cm and -0.6cm of d2]  {{\small $s_2$}};
					\node[bubble] (d7) [above left=-0.35cm and -0.08cm of d6]   {};

					\node[type] (d8) [above right=-1.5cm and -1.6cm of d1]{Sour.};
					\node[] [below=-0.1cm of d8]{$B(2,2)$};
					
					\tikzstyle{port}=[rectangle,ultra thin,draw=black!75,minimum size=6mm,inner sep=3pt]
					\node [source,align=center] (e1) [below= 1cm of d1] {};
					\node [port] (e2) [above left= -1.7 cm and -0.6cm  of e1]  {$s_1$};
					\node[bubble] (e3) [above left=-0.35cm and -0.08cm of e2]   {};

					\tikzstyle{port}=[rectangle,ultra thin,draw=black!75,minimum size=6mm,inner sep=2.7pt]
					\node [port] (e6) [above left= -1.6cm and -0.6cm of e2]  {{\small $s_2$}};
					\node[bubble] (e7) [above left=-0.35cm and -0.08cm of e6]   {};
					
					\
					
					\node[type] (e8) [above right=-1.5cm and -1.6cm of e1]{Sour.};
					\node[] [below=-0.1cm of e8]{$B(2,3)$};
					
					
					\path[-]          (a3)  edge                  node {} (c3);
					\path[-]          (a3)  edge                  node {} (d3);
					\path[-]          (a3)  edge                  node {} (e3);

					\path[-]          (a3)  edge                  node {} (b3);
					\path[-]          (b3)  edge                  node {} (a3);
					
					
					\path[-]          (c3)  edge                  node {} (a3);
					\path[-]          (d3)  edge                  node {} (a3);
					\path[-]          (e3)  edge                  node {} (a3);

					
					\path[-]          (a5)  edge  [harlequin]                node {} (d7);
					\path[-]          (a5)  edge  [harlequin]                node {} (e7);

					\path[-]          (a5)  edge  [harlequin]                node {} (b7);
					
					
					\path[-]          (d7)  edge   [harlequin]               node {} (a5);
					\path[-]          (e7)  edge   [harlequin]               node {} (a5);

					\path[-]          (b7)  edge   [harlequin]              node {} (a5);

					
					\path[-]          (b7)  edge  [harlequin]               node {} (d7);
					\path[-]          (d7)  edge  [harlequin]               node {} (b7);
					
					\path[-]          (b7)  edge  [harlequin]               node {} (e7);
					\path[-]          (e7)  edge  [harlequin]               node {} (b7);
					
					,	\end{tikzpicture}}
			\caption{Blackboard architecture.}
			\label{blackboard}
		\end{figure}
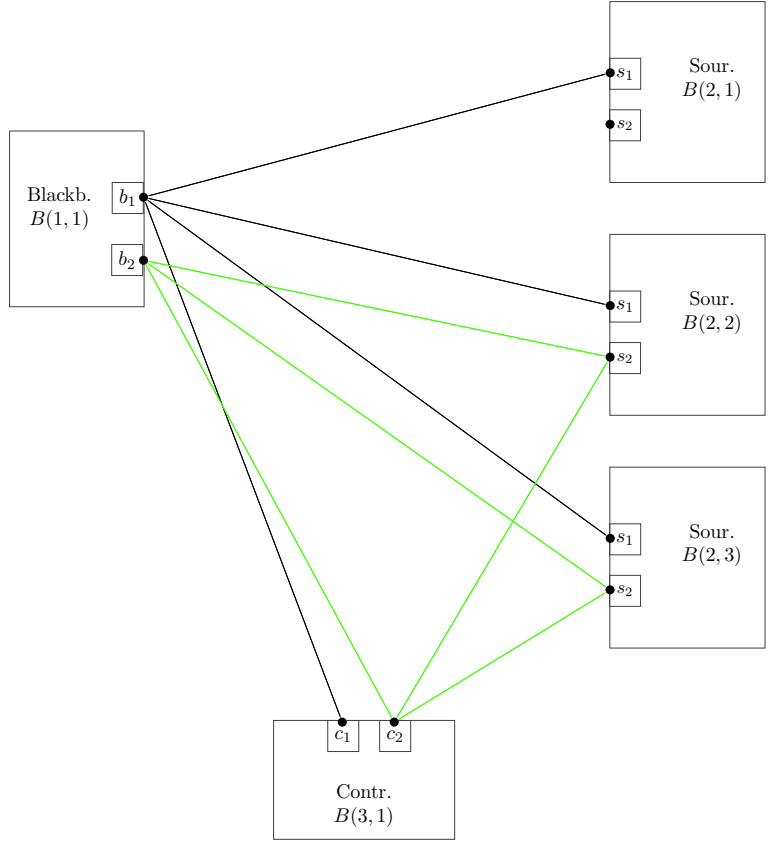
		We deal with the \emph{Blackboard} architecture \cite{Bu:Pa,Co:Bl,Me:We} which contributes to important applications (cf. for instance \cite{Ni:Bl,St:Cr}). It involves  a central component called \emph{Blackboard} which is denoted by $B(1)$,  and presents  the state of a problem to be solved, the separate independent \emph{Knowledge Sources} components ($B(2)$) representing the experts 
		providing solutions for the given problem, and  the \emph{Controller} component which is denoted by $B(3)$ and schedules the candidate knowledge sources to execute their solutions on the blackboard. Every component has two ports. Specifically, blackboard has the ports $b_1, b_2$,  knowledge sources have the ports $s_1,s_2$, and controller has the ports $c_1$ and $c_2$.\\
Blackboard interacts with knowledge sources and controller, respectively via $\{b_1,s_1\}$ and ${\{b_1,c_1\}}$  in order to present the state of the problem to sources and inform the controller. The connection of a source to blackboard, via the controller, in order to write its solution to blackboard is achieved by an interaction of type $\{b_2,s_2,c_2\}$.\\
The \emph{PIL} formula 
$$\phi_{b_1s_1}(b_1(1),s_1(j))=m_{\{b_1(1),s_1(j)\}}$$
describes the connection of blackboard $B(1,1)$ to knowledge source $B(2,j)$ ($j \in [u_2]$), so that the latter is being informed by blackboard for a given problem. 
The connection among blackboard $B(1,1)$ and controller $B(3,1)$ is given by
		$$\phi_{b_1c_1}(b_1(1),c_1(1))=m_{\{b_1(1),c_1(1)\}}.$$	\\
Finally, the \emph{PIL} formula describing the interaction among the blackboard $B(1,1)$, a knowledge source $B(2,j)$ ($j \in [u_2]$) and the controller $B(3,1)$ is given by
		$$\phi_{b_2s_2c_2}(b_2(1),s_2(j),c_2(1))=m_{\{b_2(1),s_2(j),c_2(1)\}}.$$ \\
		The \emph{PCL} formula describing the Blackboard architecture is
		$$f=\phi_{b_1c_1}(b_1(1),c_1(1))+\sum_{j\in[u_2]}\phi_{b_1s_1}(b_1(1),s_1(j))+\bigvee_{j\in [u_2]}\phi_{b_2s_2c_2}(b_2(1),s_2(j),c_2(1)).$$\\
		Let $\gamma \in C_{p\B}$ satisfies $f$. Then, the blackboard $B(1,1$) informs all knowledge sources and the controller  for a given problem, and at least one   knowledge source  is connected to blackboard and controller in order to write its solution (cf. Figure \ref{blackboard} for a  snapshot of the dynamic reconfigurable system with three knowledge sources where all the active interactions satisfy formula $f$). \\
	
	\end{example}

	\begin{example}
We continue with \emph{Request-Response} architecture considered as dynamic reconfigurable system. This architecture is very popular in distributed computing and is suitable for a wide variety of applications. Request-Response model has three  component types namely \emph{Services}, \emph{Clients}, and \emph{Coordinators}. We will denote them by $B(1), B(2)$ and $B(3)$, respectively. Services contain two ports  $get_s$ and $send$. Clients and coordinators have both three ports, specifically  $con_{cl},req,rec$, and  $con_c,get_c,dsc$, respectively. A service communicates with a client through a coordinator. More precisely, a client issues a request to a service via an interaction of type $\{get_s,req,get_c\}$. The service process the request and returns a response via an interaction of type $\{send,rec,dsc\}$. Also, the client is connected to a coordinator via  an interaction of type $\{con_{cl},con_c\}$. The coordinator controls that only one client is connected to a service. The service instance can not be connected to any other client and the coordinator can not be connected to any other service.\\
The  interaction among a client $B(2,k)$($k\in [u_2]$) and a coordinator  $B(3,j)$($j\in [u_3]$) is given by the \emph{PIL} formula $$\phi_{cc}(con_{cl}(k),con_c(j))=m_{\{con_{cl}(k),con_c(j)\}}.$$
The interaction of type $\{get_s,req,get_c\} $ among a service $B(1,i)$($i\in [u_1]$), a client $B(2,k)$($k\in [u_2]$), and a coordinator $B(3,j)$($j\in [u_3]$)  is given by $$\phi_{grg}(get_s(i),req(k),get_c(j))=m_{\{get_s(i),req(k),get_c(j)\}},$$
and the interaction of type $\{send,rec,dsc\}$  is given by the \emph{PIL} formula 	$$\phi_{srd}({{send(i),rec(k),dsc(j)}})=m_{\{send(i),rec(k),dsc(j)\}}.$$
We consider the formula
\begin{multline*}
			\zeta(get_s(i),send(i))=\bigsqcup_{k\in[u_2] \atop j\in [u_3]} \Bigl(\phi_{cc}(con_{cl}(k),con_c(j))+\phi_{grg}({get_s(i),req(k),get_c(j)})+\\ \phi_{srd}(send(i),rec(k),dsc(j))\Bigr)
\end{multline*}
which describes the connection of the service instance $B(1,i)$  with a single client  $B(2,k)$ ($k\in [u_2]$) through one coordinator instance $B(3,j)$ ($j\in[u_3]$). We let
		\begin{multline*}
			\xi=\bigwedge_{i_1 \in [u_1], k_1\in [u_2] \atop j_1\in [u_3]} \left(   \neg{\phi_{grg}\left(  get_s(i_1),req(k_1),get_c(j_1)\right)  } \sqcup \left(\sim  \phi_{grg}(get_s(i_1),req(k_1),get_c(j_1)) \wedge \xi_1  \right) \right) 
	 \end{multline*}
with
    $$\xi_1=\bigwedge_{i_2\in[u_1] \setminus\{i_1\} \atop k_2\in[u_2]}\neg{\phi_{grg}(get_s(i_2),req(k_2),get_c(j_1))} .$$
Formula $\xi$ ensures that if there is a connection between the three instances $B(1,i_1),B(2,k_1)$ and $B(3,j_1)$, then the coordinator instance $B(3,j_1)$ cannot interact with any other service instance.\\
Finally, we conclude to the \emph{PCL} formula
		$$f=\bigvee_{i\in [u_1]} \big( \zeta(get_s(i),send(i)) \wedge  \xi\big).$$ \\
		
					\definecolor{db}{RGB}{000,000,000}
		\definecolor{fb}{RGB}{255,255,255}
		\begin{figure}[h] 
			\begin{center}
				\begin{tikzpicture}[scale=0.9]
					\foreach \Z in {0,0.3,0.6}
					{\draw[fill=fb,draw=db] (-2,-2,\Z) rectangle (2,1,\Z);}
					\node at (-1.00,0) { \tiny Coordinator};
					\node at (-1,-0.6) { \tiny $B(3,j)$};
					\draw  (-1.5,-2.23) rectangle (-0.8,-1.65);
					\node at (-1.15,-1.94) {\tiny $dsc$};
					\draw[fill] (-1.15,-2.23) circle [radius=1.5pt];
					\draw  (0.4,-2.23) rectangle (1.1,-1.65);
					\node at (0.75,-1.94) {\tiny $get_c$};
					\draw[fill] (0.75,-2.23) circle [radius=1.5pt];
					\draw  (1.17,-0.3) rectangle (1.77,-1);
					\node at (1.47,-0.65) {\tiny $con_c$};
					\draw[fill] (1.77,-0.65) circle [radius=1.5pt];

					\foreach \Z in  {0,0.3,0.6}
					{\draw[fill=fb,draw=db] (-5,-7,\Z) rectangle (-1.3,-5,\Z);}
					\node at (-4.2,-6.3) { \tiny Service};
					\node at (-4.2,-6.7) {\tiny $B(1,i)$};
					\draw[fill=fb,draw=db]  (-4.4,-5.87) rectangle (-3.7,-5.22);
					\node at (-4.0,-5.54) {\tiny $get_s$};
					\draw[fill] (-4.0,-5.22) circle [radius=1.5pt];
					\draw  (-3.0,-5.87) rectangle (-2.3,-5.22);
					\node at (-2.65,-5.545) {\tiny $send$};
					\draw[fill] (-2.65,-5.22) circle [radius=1.5pt];
					
					\foreach \Z in  {0,0.3,0.6,0.9}
					{\draw[fill=fb,draw=db] (1,-7,\Z) rectangle (4.7,-5,\Z);}
					\node at (1.6,-6.4) {\tiny Client};
					\node at (1.6,-6.8) {\tiny $B(2,k)$};
					\draw  (1.1,-5.97) rectangle (1.8,-5.345);
					\node at (1.45,-5.66) {\tiny $req$};
					\draw[fill] (1.45,-5.36) circle [radius=1.5pt];
					\draw  (2.1,-5.97) rectangle (2.8,-5.345);
					\node at (2.45,-5.645) {\tiny $rec$};
					\draw[fill] (2.45,-5.36) circle [radius=1.5pt];
					\draw  (3.1,-5.97) rectangle (3.8,-5.345);
					\node at (3.45,-5.645) {\tiny $con_s$};
					\draw[fill] (3.45,-5.36) circle [radius=1.5pt];
					
					\draw (-4.0,-5.22)--(-4.0,-4.22);
					\draw (-4.0,-4.22)--(1.45,-4.22);
					\draw (1.45,-4.22)--(1.45,-5.36);
					\draw (0.75,-4.22)--(0.75,-2.23);
					
					\draw (-2.65,-5.22)--(-2.65,-3.72);
					\draw (2.45,-3.72)--(-2.65,-3.72);
					\draw (2.45,-3.72)--(2.45,-5.36);
					\draw (-1.15,-3.72)--(-1.15,-2.23);
					
					\draw  (1.77,-0.65)--(3.45,-0.65);
					\draw  (3.45,-0.65)--(3.45,-5.36);
					
				\end{tikzpicture}  
				
			\end{center}
			\caption{ Request-Response architecture.}
			\label{r-r}
		\end{figure}
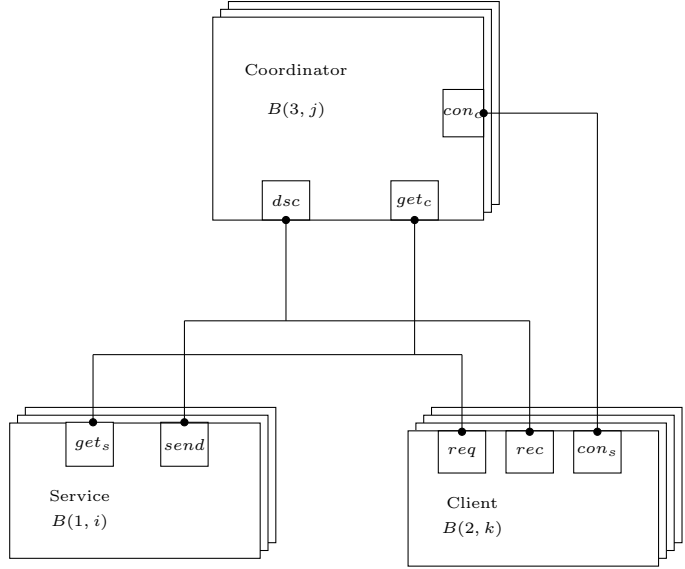
\noindent Trivially, if a configuration $\gamma \in C_{p\B}$ satisfies  $f$, then there is at least one service instance $B(1,i)$ ($i\in [u_1]$)  connected to a unique client  $B(2,k)$ ($k\in [u_2]$) through one coordinator instance $B(3,j)$ ($j\in[u_3]$) which implies that the architecture is active. 
	\end{example}
		
	\subsection{Decidability results for dra}
	In the sequel, we investigate decidability properties for dynamic reconfigurable architectures. First, we show that for every architecture $(p\B, f )$ we can compute the set of trustworthy implementations $\sigma$ for $f$.

\begin{theorem}\label{compute_trust}
		Let $(p\B, f)$ be an architecture. Then we can compute, in doubly exponential time, the set of implementations on $p\B$ which are trustworthy for $f$.
\end{theorem}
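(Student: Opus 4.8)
The plan is to reduce the statement to the construction of a Büchi automaton over a suitable finite alphabet. By Remark \ref{rem_imp}, an implementation $\sigma$ on $p\B$ is nothing but an infinite word $w_\sigma = \gamma(\sigma_0)\gamma(\sigma_1)\cdots$ over the finite alphabet $A = \mathcal{P}(I_{p\B})$, each letter recording the currently active interactions (the empty letter corresponding to a fully inactivated stage). Since $I_{p\B}$ is finite, $A$ is finite and the correspondence $\sigma \mapsto w_\sigma$ is a bijection onto $A^\omega$; I would therefore phrase the whole problem in the language of $\omega$-words and recognisability.

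The key observation is that trustworthiness is an \emph{eventually always} property. Set $G_f = \{\gamma \in C_{p\B} \mid \gamma \models f\}$. By definition $\sigma$ is trustworthy for $f$ exactly when there is $g \geq 0$ with $\gamma(\sigma_l) \in G_f$ for all $l > g$, i.e. when $w_\sigma$ lies in the language $A^* \cdot G_f^{\,\omega}$ of words having a suffix entirely contained in $G_f$. First I would compute $G_f$: since $C_{p\B}$ is finite and the relation $\gamma \models f$ is defined by a terminating structural induction on $f$ (the only quantification, in the clause for coalescing, ranging over the finitely many decompositions $\gamma = \gamma_1 \cup \gamma_2$), membership $\gamma \models f$ is decidable, and testing it over all $\gamma \in C_{p\B}$ yields $G_f$ effectively.

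Next I would exhibit a two-state Büchi automaton recognising precisely the trustworthy implementations. Take $\mathcal{A}_f = (\{q_0,q_1\}, A, \{q_0\}, \Delta, \{q_1\})$ with $\Delta = \{(q_0,a,q_0) \mid a \in A\} \cup \{(q_0,a,q_1) \mid a \in G_f\} \cup \{(q_1,a,q_1) \mid a \in G_f\}$. From $q_1$ the automaton can move only on letters of $G_f$ and only back to $q_1$, so a path is successful iff it guesses a position after which every letter belongs to $G_f$; hence $L(\mathcal{A}_f) = A^* G_f^{\,\omega}$ is exactly the set of words coding trustworthy implementations. This both establishes decidability and, since $L(\mathcal{A}_f)$ is presented by a concrete automaton, \emph{computes} the desired set.

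Finally I would carry out the complexity bookkeeping, which I expect to be the only delicate point. The number of ports $|P_{p\B}|$ is fixed by $p\B$; then $|I_{p\B}| \leq 2^{|P_{p\B}|}$, and the alphabet $A = \mathcal{P}(I_{p\B})$ already has doubly exponential size $2^{|I_{p\B}|} \leq 2^{2^{|P_{p\B}|}}$. Building $\mathcal{A}_f$ requires, for each of these doubly exponentially many letters $\gamma$, one satisfaction test $\gamma \models f$; each test costs at most exponentially in $|\gamma| \leq |I_{p\B}|$ (the decomposition search for the coalescing operator), hence again doubly exponentially in $|P_{p\B}|$, so the total running time remains doubly exponential. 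The main obstacle is thus not the logical content but checking that both the size of the transition table (dominated by $|A|$) and the cost of deciding each $\gamma \models f$ fit inside the claimed bound; should the naive decomposition search prove too costly, I would instead precompute a full normal form of $f$ and read $G_f$ directly off its full monomials.
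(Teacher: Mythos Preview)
Your proposal is correct and follows essentially the same route as the paper: both reduce trustworthiness to the $\omega$-regular language $A^{*}G_f^{\omega}$ and recognise it with the very same two-state B\"uchi automaton (initial state looping on all letters, accepting state looping on letters satisfying $f$). The only cosmetic differences are that the paper uses the alphabet $C_{p\B}$ rather than your $\mathcal{P}(I_{p\B})$, and that it computes the set of ``good'' letters by first passing to the full normal form $f'$ of $f$ (your fallback option) instead of testing $\gamma\models f$ directly; neither change affects the argument or the complexity bound.
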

\begin{proof}
Firstly, we compute the set $C_{p\B}$ in doubly exponential time. Next, by  Theorem 4.43. in \cite{Ma:Co}, we can effectively construct a PCL formula $f'$ in full normal form which is equivalent to $f$. For this, we need also a doubly exponential time. Using the structure of $f'$, we determine in constant time, the set $C_{p\B}( \models f') \subseteq C_{p\B}$ with  
$$C_{p\B}( \models f')= \{\gamma \in C_{p\B} \mid \gamma \models f'\}.$$

\noindent Next, we consider the B\"{u}chi automaton $\mathcal{A}_{f'}=\left(\{q_0,q_1\},C_{p\mathcal{B}},\{q_0\},\Delta,\{q_1\} \right) $  with $$\Delta=\{\left(q_0,\gamma,q_0\right)\mid\gamma\in C_{p\mathcal{B}} \} \cup \{\left(q_0,\gamma,q_1\right)\mid\gamma\in C_{p\B}(\models f')\}\cup \{\left(q_1,\gamma,q_1\right)\mid\gamma\in C_{p\B}(\models f')\}.$$		
		By construction of $\mathcal{A}_{f'}$, we get that a sequence $\left(\gamma_i\right)_{i\geq 0} \in C_{p\mathcal{B}}^\omega $ is accepted by $\mathcal{A}_{f'}$ iff there is a natural number $g\geq0$ such that $\gamma_i \models f'$ for every $i>g$, i.e., $\gamma_i \models f$ for every $i>g$.  We conclude our result by Remark \ref{rem_imp}.
\end{proof}	

\begin{corollary} \label{col_trust}
Let $(p\B, f, \sigma)$ be a dra. Then we can decide, in doubly exponential time, whether $\sigma$ is trustworthy for $f$ or not.
\end{corollary}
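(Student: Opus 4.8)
The plan is to reduce the decision problem directly to the membership problem for the B\"{u}chi automaton $\mathcal{A}_{f'}$ constructed in the proof of Theorem~\ref{compute_trust}. By Remark~\ref{rem_imp}, the implementation $\sigma=\left(\sigma_l\right)_{l\geq0}$ corresponds bijectively to the sequence $\gamma=\left(\gamma(\sigma_l)\right)_{l\geq0}\in C_{p\B}^\omega$ of sets of active interactions, so deciding whether $\sigma$ is trustworthy for $f$ amounts to deciding whether this particular infinite word is accepted by $\mathcal{A}_{f'}$.

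First I would note that, as established inside the proof of Theorem~\ref{compute_trust}, the acceptance condition of $\mathcal{A}_{f'}$ is exactly the trustworthiness condition: a sequence $\left(\gamma_i\right)_{i\geq0}\in C_{p\B}^\omega$ is accepted iff there is a natural number $g\geq0$ with $\gamma_i\models f$ for every $i>g$. This is precisely the statement $\sigma_l\models f$ for every $l>g$, i.e., that $\sigma$ is trustworthy for $f$. Hence the construction already gives an automaton whose language is the set of trustworthy implementations, and the cost of building it, which includes computing $C_{p\B}$, the equivalent full normal form formula $f'$, the accepting set $C_{p\B}(\models f')$, and assembling $\Delta$, is doubly exponential by Theorem~\ref{compute_trust}.

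The remaining step is to check membership of the given $\gamma(\sigma)$ in $L(\mathcal{A}_{f'})$. Since $\mathcal{A}_{f'}$ has the simple two-state shape $\left(\{q_0,q_1\},C_{p\B},\{q_0\},\Delta,\{q_1\}\right)$ whose only accepting behaviour is to remain in the sink state $q_1$ forever, membership reduces to a purely local test on the sequence: $\sigma$ is trustworthy for $f$ iff there exists $g\geq0$ such that $\gamma(\sigma_l)\in C_{p\B}(\models f')$ for all $l>g$, equivalently iff $\gamma(\sigma_l)\in C_{p\B}(\models f')$ holds from some point on. Once $C_{p\B}(\models f')$ is available, each test $\gamma(\sigma_l)\in C_{p\B}(\models f')$ is a lookup, so the decision adds no asymptotic cost beyond the doubly exponential construction, and the total running time remains doubly exponential. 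I would therefore conclude by invoking Remark~\ref{rem_imp} and Theorem~\ref{compute_trust} to assemble these observations.

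The main obstacle, such as it is, is conceptual rather than computational: one must be careful that the eventual-satisfaction acceptance condition of $\mathcal{A}_{f'}$ coincides with the definition of trustworthiness and that the one-to-one correspondence of Remark~\ref{rem_imp} is applied in the right direction, sending the implementation $\sigma$ to the word $\gamma(\sigma)$ to be tested. Since $\sigma$ may be an infinite object, one also needs the implementation to be presented in an effectively checkable form (for instance, so that $\gamma(\sigma_l)$ is computable for each $l$ and the eventual-membership condition is decidable); under such a presentation the corollary follows immediately from the preceding theorem.
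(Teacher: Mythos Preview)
Your proposal is correct and follows essentially the same approach as the paper: invoke Theorem~\ref{compute_trust} to obtain (in doubly exponential time) the set of trustworthy implementations, then test whether the given $\sigma$ belongs to it. You spell out the membership test via the two-state B\"{u}chi automaton more explicitly than the paper's one-line proof, and your closing remark about needing an effectively presented $\sigma$ is a valid caveat that the paper itself leaves implicit.
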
	
\begin{proof}
By Theorem \ref{compute_trust} we compute, in doubly exponential time, the set of implementations on $p\B$ which are trustworthy for $f$. Hence, we only need to check whether $\sigma$ belongs to that set. 
\end{proof}	

\

Next we define a type of equivalence for dra's. More precisely, let $(p\B, f, \sigma)$ and $(p\B', f', \sigma')$ be two dra's. Then $(p\B, f, \sigma)$ and $(p\B', f', \sigma')$ are called \emph{equivalent} if $p\B=p\B'$, $f \equiv f'$, and $\sigma$, $\sigma'$ are both trustworthy for $f$ and $f'$, respectively. We show the next proposition.  		
\begin{proposition}		   
Let $(p\B, f, \sigma)$ and $(p\B', f', \sigma')$ be two dra's. We can decide, in doubly exponential time, whether $(p\B, f, \sigma)$ and $(p\B', f', \sigma')$  are equivalent  or not.
\end{proposition}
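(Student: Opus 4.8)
The plan is to reduce the equivalence test to the three defining conditions and to verify each of them within the stated bound. Recall that $(p\B, f, \sigma)$ and $(p\B', f', \sigma')$ are equivalent precisely when (i) $p\B = p\B'$, (ii) $f \equiv f'$, and (iii) $\sigma$ is trustworthy for $f$ and $\sigma'$ is trustworthy for $f'$. I would check these three conditions in turn, declaring the two dra's equivalent exactly when all three succeed, and declaring them non-equivalent as soon as one fails.

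First, I would test whether $p\B = p\B'$. Since a component-based system is a finite collection of atomic components, each a finite LTS, this is a purely syntactic comparison of two finite structures and can be carried out in time polynomial in the size of their descriptions. If the test fails we stop; otherwise we record the common port set $P := P_{p\B} = P_{p\B'}$, over which both $f$ and $f'$ are PCL formulas, so that condition (ii) is well posed.

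Second, I would decide $f \equiv f'$. Here I would invoke Theorem 4.43 of \cite{Ma:Co} to construct, in doubly exponential time, PCL formulas $\overline{f}$ and $\overline{f'}$ in full normal form equivalent to $f$ and $f'$, respectively. Since a full monomial $m_a$ is satisfied only by the singleton $\{a\}$, a formula in full normal form $\bigsqcup_{i \in I}\sum_{j\in J_i} m_{i,j}$ is satisfied by exactly those $\gamma \in C(P)$ of the shape $\gamma = \{a_{i,j}\mid j \in J_i\}$ for some $i \in I$; hence the full normal form determines, and is determined by, the satisfaction set $\{\gamma \in C(P)\mid \gamma \models f\}$. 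Therefore $f \equiv f'$ holds iff the two satisfaction sets read off from $\overline{f}$ and $\overline{f'}$ coincide, and I would decide this by comparing those collections of configurations. Both the normal-form construction and the subsequent comparison remain within doubly exponential time, as $|C(P)|$ is doubly exponential in $|P|$.

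Third, with the common system $p\B$ and the formulas $f,f'$ in hand, I would apply Corollary \ref{col_trust} twice, once to $(p\B, f, \sigma)$ and once to $(p\B', f', \sigma')$, to decide in doubly exponential time whether $\sigma$ is trustworthy for $f$ and whether $\sigma'$ is trustworthy for $f'$. Since the overall procedure performs a constant number of tests, each running in at most doubly exponential time, the total running time is doubly exponential. The main obstacle is the second step: a direct check of $f \equiv f'$ by enumerating all $\gamma \in C(P)$ would already be doubly exponential, so the point is to confirm that routing the decision through the full normal form incurs no further blow-up. This is ensured by Theorem 4.43 of \cite{Ma:Co}, whose output has size at most doubly exponential, together with the fact that the induced satisfaction sets live inside $C(P)$ and can thus be compared within the same bound.
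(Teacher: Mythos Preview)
Your proof is correct and follows essentially the same three-step decomposition as the paper: test $p\B = p\B'$, test $f \equiv f'$, and apply Corollary~\ref{col_trust} twice for trustworthiness. The only minor difference is that the paper appeals directly to a decidability result for PCL equivalence from \cite{Pa:We}, whereas you spell out this step via the full normal form construction of Theorem~4.43 in \cite{Ma:Co} and a comparison of the resulting satisfaction sets; both routes yield the same doubly exponential bound.
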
	
\begin{proof}
The equality $p\B=p\B'$ is decidable in linear time, and the equivalence $f\equiv f'$ is decidable in doubly exponential time (cf. Theorem in \cite{Pa:We}). Finally,  the trustworthiness of the implementations $\sigma$ and $\sigma'$ can be decided, by Corollary \ref{col_trust}, in doubly exponential time, hence we conclude our result.
\end{proof}

\

Finally we introduce the notion of partially trustworthiness for implementations. Specifically, let $(p\B,f, \sigma)$ be a dra. Then, the  implementation $\sigma$ is called  \emph{partially trustworthy for} $f$ if there are $\gamma'_l \subseteq \gamma(\sigma_l)$ for every $l \geq 0 $ and a natural number $g \geq0$  such that $\gamma'_l \models f$   for every $l \geq g$. Trivially, every trustworthy implementation for $f$ it is also partially trustworthy for $f$. Though, as we show in the sequel, we can decide whether a non-trustworthy implementation is partially trustworthy. For this we need some preliminary matter. \\
Let $p\B$ be a component-based system and $f=\bigsqcup\nolimits_{i\in I}\sum\nolimits_{i\in J_{i}}m_{i,j}$ a PCL formula over $P_{p\B}$ in full normal form. Trivially every full monomial $m_{i,j}$ determines a unique interaction $a_{i,j}$ for every $i \in I$ and $j \in J_i$. Next let $\gamma \in C_{p\B}$ such that $\gamma \not \models f$. Hence $\gamma \not \models \sum\nolimits_{j\in J_{i}}m_{i,j}$ for every $i \in I$ which in turn implies that $\gamma \neq \{a_{i,j} \in I_{p\B} \mid j \in J_i\}$ for every $i \in I$. We set $\mathcal{D}_f=\{\gamma \in C_{p\B} \mid \{a_{i,j} \mid j\in J_{i}\} \subseteq \gamma \text{ for some } i \in I  \}$.

\begin{proposition}\label{part-trust}
Let	$(p\B, f, \sigma)$ be a dra such that $\sigma$ is not trustworthy for $f$. Then, we can decide whether $\sigma$ is partially trustworthy for $f$ or not.
\end{proposition}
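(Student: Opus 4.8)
The plan is to reduce partial trustworthiness to a simple eventual-membership condition on the sequence $\left(\gamma(\sigma_l)\right)_{l\geq 0}$, and then to recognize that condition by a Büchi automaton exactly as in the proof of Theorem \ref{compute_trust}. First I would normalize $f$: by Theorem 4.43 of \cite{Ma:Co} replace $f$ by an equivalent formula $f'=\bigsqcup_{i\in I}\sum_{j\in J_i}m_{i,j}$ in full normal form. Since each $m_{i,j}$ is the characteristic monomial of the interaction $a_{i,j}$, any configuration $\delta\in C_{p\B}$ satisfies $m_{i,j}$ iff $\delta=\{a_{i,j}\}$, and hence by the semantics of coalescing $\delta\models\sum_{j\in J_i}m_{i,j}$ iff $\delta=\{a_{i,j}\mid j\in J_i\}$. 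Consequently $\delta\models f$ iff $\delta=\{a_{i,j}\mid j\in J_i\}$ for some $i\in I$.

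From this characterization I obtain the crucial equivalence: for a configuration $\gamma\in C_{p\B}$ there exists a subset $\gamma'\subseteq\gamma$ with $\gamma'\models f$ if and only if $\{a_{i,j}\mid j\in J_i\}\subseteq\gamma$ for some $i\in I$, i.e. if and only if $\gamma\in\mathcal{D}_f$. Indeed, $\gamma'\models f$ forces $\gamma'=\{a_{i,j}\mid j\in J_i\}$ for some $i$, and a set of this exact form sits below $\gamma$ precisely when $\gamma\in\mathcal{D}_f$; conversely, if $\gamma\in\mathcal{D}_f$ the witnessing set $\{a_{i,j}\mid j\in J_i\}$ is itself a subset of $\gamma$ satisfying $f$. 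Applying this with $\gamma=\gamma(\sigma_l)$ (and filling in, for the finitely many indices $l<g$, any nonempty subset of $\gamma(\sigma_l)\in C_{p\B}$) shows that $\sigma$ is partially trustworthy for $f$ exactly when there is some $g\geq 0$ with $\gamma(\sigma_l)\in\mathcal{D}_f$ for every $l\geq g$.

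Next I would build the Büchi automaton $\mathcal{A}_{\mathcal{D}_f}=\left(\{q_0,q_1\},C_{p\B},\{q_0\},\Delta,\{q_1\}\right)$ with
$$\Delta=\{(q_0,\gamma,q_0)\mid\gamma\in C_{p\B}\}\cup\{(q_0,\gamma,q_1)\mid\gamma\in\mathcal{D}_f\}\cup\{(q_1,\gamma,q_1)\mid\gamma\in\mathcal{D}_f\},$$
which, by the same reasoning as in Theorem \ref{compute_trust}, accepts a word $(\gamma_l)_{l\geq 0}\in C_{p\B}^{\omega}$ iff the word is eventually contained in $\mathcal{D}_f$. By Remark \ref{rem_imp} the implementation $\sigma$ corresponds to the word $\left(\gamma(\sigma_l)\right)_{l\geq 0}$, so $\sigma$ is partially trustworthy for $f$ iff this word is accepted by $\mathcal{A}_{\mathcal{D}_f}$. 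Since the passage from $f$ to $f'$, the extraction of the interactions $a_{i,j}$, the computation of $\mathcal{D}_f$, and the construction of $\mathcal{A}_{\mathcal{D}_f}$ are all effective (with the same doubly exponential bounds as in Theorem \ref{compute_trust}), decidability follows.

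The main obstacle is the reduction in the second paragraph: showing that the quantifier ``some subset of $\gamma(\sigma_l)$ satisfies $f$'' collapses to the plain membership $\gamma(\sigma_l)\in\mathcal{D}_f$. This rests entirely on the full normal form, through which each coalescing block $\sum_{j\in J_i}m_{i,j}$ is satisfied by the \emph{single} configuration $\{a_{i,j}\mid j\in J_i\}$; thus downward closure under the relation $\gamma'\models f$ is controlled purely by which of these finitely many interaction-sets are contained in $\gamma(\sigma_l)$. Once this monotone characterization is established, the automaton construction and the appeal to Remark \ref{rem_imp} are routine adaptations of Theorem \ref{compute_trust}.
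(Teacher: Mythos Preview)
Your proposal is correct and follows essentially the same approach as the paper: normalize $f$ to full normal form $f'$, use the set $\mathcal{D}_{f'}$ of configurations containing some $\{a_{i,j}\mid j\in J_i\}$, build the two-state B\"uchi automaton over $C_{p\B}$ with accepting sink reachable exactly on letters from $\mathcal{D}_{f'}$, and conclude via Remark~\ref{rem_imp} as in Theorem~\ref{compute_trust} and Corollary~\ref{col_trust}. In fact you spell out more than the paper does, since you explicitly justify the key reduction---that ``some subset of $\gamma(\sigma_l)$ satisfies $f$'' is equivalent to $\gamma(\sigma_l)\in\mathcal{D}_{f'}$---which the paper leaves implicit in its definition of $\mathcal{D}_f$ just before the proposition.
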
 	
\begin{proof}
Taking into account the proof of Theorem \ref{compute_trust} we compute the set $\mathcal{D}_{f'}$. Since we have already computed $C_{p\B}$ and $C_{p\B}(\models f')$ this requires a constant time. We construct the B\"{u}chi automaton $\mathcal{B}_{f'}=\left(\{q'_0,q'_1\},C_{p\mathcal{B}},\{q'_0\},\Delta,\{q'_1\} \right) $  with $$\Delta=\{\left(q'_0,\gamma,q'_0\right)\mid\gamma\in C_{p\mathcal{B}} \} \cup \{\left(q'_0,\gamma,q'_1\right)\mid\gamma\in   \mathcal{D}_{f'}\}\cup \{\left(q'_1,\gamma,q'_1\right)\mid\gamma\in  \mathcal{D}_{f'} \}.$$
We conclude our result by  a similar argument as in Theorem \ref{compute_trust} and Corollary \ref{col_trust}.
\end{proof}

\

We should not that the decidability of the partial trustworthiness of $\sigma$ does not require any additional time, since by Theorem \ref{compute_trust} we have already shown that $\sigma$ is not trustworthy in doubly exponential time. Furthermore, our last result gives us the possibility to "correct" a partially trustworthy implementation. Indeed, keeping the above notations we consider the implementation $\sigma'=\left(\sigma_l|_{\gamma'_l}\right)_{l \geq 0}$ which is defined for every $l \geq 0$ and $a \in I_{p\B}$, by $\sigma_l|_{\gamma'_l}(a)=\mathrm{active}$ if $a \in \gamma'_l$ and $\sigma_l|_{\gamma'_l}(a)=\mathrm{inactive}$ otherwise. Trivially, the implementation $\sigma'$ is now trustworthy for $f$.  

\

\noindent \textbf{Future work.}

\noindent i) The complexity of our "correction" method has to be computed.
 
\noindent ii) An implementation $\sigma$ may be non-trustworthy since: (a) it is partially trustworthy (see above), (b) there are infinitely many l's such that a required interaction is missing in $\gamma(\sigma_l)$, or (c) an infinite combination of both the cases (a) and (b). It is a challenge to determine whether the non-trustworthiness of an implementation is due to case (b) or (c) and even to find a way to "correct" it. 

\noindent iii) It should be interesting to develop a software tool implementing our theory.

\noindent iv) We believe that it is important to study model-checking techniques for reconfigurable architectures described by PCL. In \cite{Ho:Te} PCL is evolved but its formulas participate as atomic propositions in an LTL.

\end{document}